\newcommand{\subrefb}[2]{ \eqref{#1}\raisebox{-1mm}{\scriptsize #2} }
\DeclareMathOperator*{\argmin}{arg\,min}
\title{Euler elastica as a $\Gamma$-Limit of discrete bending energies of one-dimensional chains of atoms}
\author{
    Malena I.~Espa\~ nol\footnotemark[1]
\and
    Dmitry Golovaty\footnotemark[1]
\and
    J. Patrick Wilber\footnotemark[1]}
\begin{document}
\maketitle

\renewcommand{\thefootnote}{\fnsymbol{footnote}}
\footnotetext[1]{
    Department of Mathematics,
    University of Akron,
    Akron,
    OH 44325, USA.}

\begin{abstract}
This work is motivated by discrete-to-continuum modeling of the
mechanics of a graphene sheet, which is a single-atom thick
macromolecule of carbon atoms covalently bonded to form a hexagonal
lattice.  The strong covalent bonding makes the sheet essentially
inextensible and gives the sheet a resistance to bending.  We study a
one-dimensional atomistic model that describes the cross-section of a
graphene sheet as a collection of rigid links connected by torsional springs. 
$\Gamma$-convergence is used to
rigorously justify an upscaling procedure 
for the discrete bending energy of
the atomistic model.  Our result establishes that as the bond length in
the atomistic model goes to 0, the bending energies $\Gamma$-converge
to Euler's elastica.
\end{abstract}

\begin{keywords} $\Gamma$-convergence, graphene, carbon nanotubes,
  elastica, bending energy \end{keywords}

\section{Introduction}
This work is motivated by discrete-to-continuum modeling of the mechanics
of graphene.  A graphene sheet is a single-atom thick macromolecule of
carbon atoms arranged in a hexagonal lattice.  Graphene has been
intensively studied since 2004 when individual graphene sheets were
first isolated by Geim and Novoselov using mechanical exfoliation, a
Nobel-prize-winning achievement \cite{novoselov2004electric}.
Graphene has exceptional physical properties and yields insights on
fundamental physics of two-dimensional materials.  It is also of
interest as a basic building block for other extensively studied
carbon nanostructures (CNS), including single-walled and multi-walled
carbon nanotubes, as well as bilayers and stacks of graphene sheets.

Interactions between carbon atoms in a CNS are fundamental for
explaining the arrangement and relative orientations of the
nanostucture's constituent graphene layers.  Each atom on a given
graphene sheet is covalently bonded to its three nearest neighbors to
form a hexagonal lattice.  This strong covalent bonding makes the
sheet essentially inextensible and gives the sheet a resistance to
bending.  Atoms on a sheet interact with atoms on nearby sheets by
relatively weak van der Waals forces.  This interaction defines an
equilibrium distance between pairs of weakly interacting atoms, and deviations from this
distance cost energy.  The weak interaction energy is minimized when
nearby lattices are in registry.  The lattices of two adjacent sheets
adjust or shift to allow a typical atom on one sheet to be close to
the equilibrium distance from its several nearest neighbors on the
adjacent sheet.

Registry effects are significant for understanding the mechanical
behavior and equilibrium configurations of various CNS.  For example,
registry effects lead to polygonization of multi-walled carbon
nanotubes of large diameter \cite{golovaty2008continuum}.  To
understand the influence of registry effects on CNS at the macroscopic
level, it is important to construct continuum models that retain
atomistic information.  

In \cite{golovaty2008continuum}, the authors derive a continuum theory
of multi-walled carbon nanotubes by upscaling a simple one-dimensional
atomistic model that takes into account both strong covalent bonds
between the atoms in a graphene sheet and weak bonds between the atoms
in adjacent sheets.  Part of this model is based on upscaling a
resistance to bending described atomistically.  The resulting
continuum bending energy takes the form of the classical Euler's elastica
model \cite{antman2006nonlinear}.  Given a sufficiently smooth curve $\mathcal C\subset\mathbb{R}^2$, the Euler's elastica model assigns to $\mathcal C$ the bending energy $\int_{\mathcal C}\kappa^2$, where $\kappa$ is the curvature of $\mathcal C$. Within a larger effort to provide a rigorous justification for
the procedure used in \cite{golovaty2008continuum}, a first step is to
consider upscaling from a discrete to a continuum bending energy of a
single graphene sheet.  Hence, in this paper we use
$\Gamma$-convergence \cite{braides2002gamma} to rigorously justify the
upscaling procedure used in \cite{golovaty2008continuum} for the
bending energy of a chain of atoms.

Our work is related to that of Bruckstein et al.\@
\cite{bruckstein107,bruckstein_igl}, who studied discrete
approximations of the classical elastica model motivated by problems
in image processing.  In \cite{bruckstein107,bruckstein_igl}, the discrete energies were
defined on piecewise-affine curves and assumed to depend on the exterior
angles between the straight segments of the curve.  The authors considered
$\Gamma$-convergence for several related families of discrete energy
functionals defined on the space of rectifiable planar curves of
finite total absolute curvature.  The convergence in the space of
rectifiable curves was defined in the sense of Fr\'echet distance.  An
advantage of working in the space of rectifiable curves of finite
total absolute curvature is that it contains both smooth and piecewise-affine
curves.  The limiting energy functionals in
\cite{bruckstein107,bruckstein_igl} are essentially the
$L^{\alpha}$-norm of the curvature, where $\alpha\geq1$.

The main distinguishing feature between the motivation in this paper and that in \cite{bruckstein107,bruckstein_igl} is that we are approximating a discrete chain of atoms by a continuum curve, while in \cite{bruckstein107,bruckstein_igl} the goal is to approximate a continuum curve by a polygon. As a result, our discrete model is determined by the physics of the problem, while in \cite{bruckstein107,bruckstein_igl} the discrete framework is determined by the convenience of the approximation. In a nondimensional setting, our model
represents a cross-section of a graphene sheet as a chain of atoms
in which all links connecting the atoms have equal length $\varepsilon>0$ while the total
length of the chain is $1$.  One may think of this chain as a polygon in the plane. The parameter $\varepsilon$ representing
the interatomic bond length is assumed to be small.  To mimic the
situation in \cite{golovaty2008continuum}, we assume that the chain is
closed, thus describing a graphene sheet rolled into a carbon
nanotube.  This assumption, however, is not essential to our analysis
and can easily be removed.  As $\varepsilon\to0$, the corresponding
chains converge to a curve on the plane that is a continuum description of a cross-section of the nanotube. 

Instead of working with curves directly, as in
\cite{bruckstein107,bruckstein_igl}, we represent each arc-length-parametrized curve by its
corresponding angle function. A discrete atomic chain then is described by a piecewise-constant function whose values are the angles between the links of the
chain and the $x$-axis. The angles remain constant on each successive
subinterval of the length $\varepsilon$ of $[0,1]$. To pass from the discrete to a continuum description, we assume that 
as $\varepsilon\to0$, the sequence of angle functions converges in an appropriate
sense to a limiting function defined on $[0,1]$. 

We define the bending energy of the chain as a function of the angles between the adjacent
links of the chain and thus of the increments of the angle
function. In the limit $\varepsilon\to 0$, the bending energy becomes a function of the derivative of the limiting angle function. Then, if we expect that the bending energy
reduces to Euler's elastica as $\varepsilon\to 0$, the
limiting angle function must have a square integrable derivative. From this we would conclude that the
limiting angle function is smoother than the discrete angle functions that converge to it. 

To prove $\Gamma$-convergence of the discrete bending energies to Euler's elastica, we follow the strategy employed in \cite{braides2002gamma} and
replace the piecewise-constant functions with auxiliary piecewise-affine functions that have the same discrete-level energy and that
belong to the same space as the limiting angle function. The
principal difficulty in proving $\Gamma$-convergence is the
construction of the recovery sequence of the angle functions. In particular, here we need to design a recovery sequence that satisfies the constraints  that the piecewise-affine curves must be closed and have unit length. Because of the length constraint our construction is more complicated than its analog in \cite{bruckstein107,bruckstein_igl}.

This paper is organized as follows.  In Section~\ref{s1}, we formulate
our discrete model of the bending energy of an atomistic chain.  The
following section introduces the continuum analogue of this discrete
model and sets up the appropriate function spaces for our
$\Gamma$-convergence result.  Section~\ref{s3} contains this result,
Theorem~4.3, and its proof.

\section{Discrete formulation}\label{s1}

Given a small $\varepsilon>0$ such that $N_\varepsilon:=1/\varepsilon\in\mathbb N$, let $\left\{\bar
r_i^\varepsilon\right\}_{i=1}^{N_\varepsilon} \subset \mathbb{R}^2$ be
an ordered set of position vectors for $N_\varepsilon$ points in the plane
such that $\left|\bar r_{i+1}^\varepsilon-\bar
r_i^\varepsilon\right|=\varepsilon$ for 
$i=1,\ldots,N_\varepsilon-1$ and $\left|\bar
r_{N_\varepsilon}^\varepsilon-\bar
r_1^\varepsilon\right|=\varepsilon$.  We refer to $\left\{\bar
r_i^\varepsilon\right\}_{i=1}^{N_\varepsilon}$ as a {\em chain}
$\mathcal C^\varepsilon$.
As Fig. \ref{fig:1} shows, $\mathcal C^\varepsilon$ can be associated with a piecewise-affine curve $\mathcal C_\varepsilon$ in $\mathbb{R}^2$ by connecting the
consecutive points in $\left\{\bar
r_i^\varepsilon\right\}_{i=1}^{N_\varepsilon}$. This piecewise-affine curve has
length 1. Notice that
we use superscript $\varepsilon$ for the discrete chain and subscript
$\varepsilon$ for the associated curve. 

\begin{figure}[htb]
\centering
\includegraphics[height=1.5in]{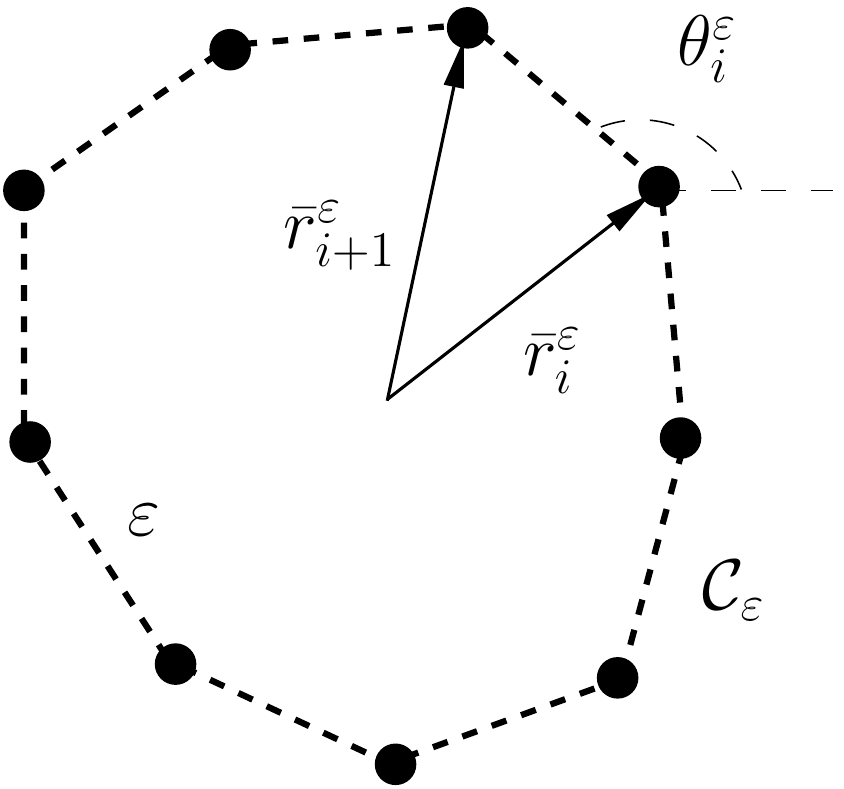}
\caption{Geometry of the problem: a discrete chain $\mathcal C^\varepsilon$ and an associated piecewise-affine curve $\mathcal C_\varepsilon$.}
  \label{fig:1}
\end{figure}
For notational convenience in what follows, we 
define $\bar r_{0}^\varepsilon
:= \bar r_{N_{\varepsilon}}^\varepsilon$ and $\bar
r_{N_\varepsilon+1}^\varepsilon := \bar r_{1}^\varepsilon$.
Given a chain $\mathcal C^\varepsilon$, we can choose
a collection of angles $\theta_i^\varepsilon$ that satisfy
\begin{equation}
\bar r_{i+1}^\varepsilon - \bar r_{i}^\varepsilon 
= 
\varepsilon(\cos(\theta_i^\varepsilon),
\sin(\theta_i^\varepsilon)),\ \ i=0,\ldots,N_\varepsilon, \label{ee46}
\end{equation}
and that satisfy
$\theta_{i}^\varepsilon-\theta_{i-1}^\varepsilon\in(-\pi,\pi)$ for
every $i=1,\ldots,N_\varepsilon$.  (We can make the choice of
$\theta_i^\varepsilon$ for $i=0,\ldots,N_\varepsilon$ unique by also
requiring that, say, $\theta_1^\varepsilon \in (-\pi, \pi]$.)
As a consequence
$\theta_0^\varepsilon=\theta^\varepsilon_{N_\varepsilon}-2k\pi$ for
some integer $k$, even though $\theta_{0}^{\varepsilon}$ and
$\theta_{N_{\varepsilon}}^{\varepsilon}$ are angles that both
correspond to the vector $\bar r_{1}^\varepsilon - \bar r_{N_{\varepsilon}}^\varepsilon$.
Based on the geometry of our problem, we shall consider only chains for which $k=1$ (if $k\geq2$ the piecewise-affine curve $\mathcal C_\varepsilon$ must self-intersect).

In what follows, we denote a vector of angles associated with a chain by
$\Theta^\varepsilon:=\left(\theta_1^\varepsilon,\ldots,\theta_{N_\varepsilon}^\varepsilon\right)\in\mathbb{R}^{N_\varepsilon}$.
Because $\sum_{i=1}^{N_\varepsilon} (\bar r_{i}^\varepsilon -
\bar r_{i-1}^\varepsilon) = 0,$ the vector $\Theta^\varepsilon$
satisfies the constraints
\begin{equation}
  \sum_{i=1}^{N_\varepsilon}\cos(\theta_{i}^\varepsilon) 
  = 
  \sum_{i=1}^{N_\varepsilon} \sin(\theta_{i}^\varepsilon) = 0. \label{ee44}
\end{equation}
Note that, for a given $\varepsilon$, there is a one-to-one correspondence between
$\Theta^\varepsilon$ and
$\mathcal C^\varepsilon$, up to a rigid rotation and translation of
$\mathcal C^\varepsilon$ in $\mathbb R^2$.

We define the energy ${\mathcal E}^{\varepsilon}[\mathcal
  C^\varepsilon]$ of a chain $\mathcal C^\varepsilon$ by
\begin{equation}
\label{eq:prop}
{\mathcal E}^{\varepsilon}[\mathcal C^\varepsilon]:=\frac{1}{\varepsilon} \sum_{i=1}^{N_{\varepsilon}} f\left( \frac{\left(\bar r_{i+1}^\varepsilon - \bar r_{i}^\varepsilon\right)\cdot\left(\bar r_{i}^\varepsilon - \bar r_{i-1}^\varepsilon\right)}{\varepsilon^2}\right),
\end{equation}
where the function $f\colon (-1,1] \to \mathbb{R}$ satisfies
\begin{equation}
\ f\in C^{\infty}((-1,1]),\ \ f'(x) < 0
  \mbox{ on}(-1,1],\ \ \lim_{x\to -1^+} f(x) = \infty,\mbox{ and }f(1)=0.
  \label{ee43}  
\end{equation}
We can rewrite the energy functional \eqref{eq:prop} in terms of angles
\begin{equation}
{\mathcal E}^{\varepsilon}[\mathcal C^\varepsilon]=E^{\varepsilon}[\Theta^\varepsilon]:=\frac{1}{\varepsilon} \sum_{i=1}^{N_{\varepsilon}} f\left( \cos(\theta_i^\varepsilon -\theta_{i-1}^\varepsilon) \right).
\end{equation}

To simplify the notation, we introduce the function $\psi\colon (-\pi,\pi)\to \mathbb{R}$ as
\[\psi(\theta) := f(\cos(\theta)).\]
By \eqref{ee43}
$\psi$ is an infinitely differentiable, even function on
$(-\pi,\pi)$ satisfying
\begin{equation}
\label{eq:prop_psi}
\psi(\theta)>\psi(0)=0\mbox{ for every }\theta\in(-\pi,0)\cup(0,\pi),\ \ \psi''(0)>0,\ \ \lim_{x\to \pm \pi} \psi(\theta) = \infty.
\end{equation}
We next set
\begin{equation}
\label{eq:1.23}
\psi_{\varepsilon}(\xi) :=\varepsilon^{-2}\psi(\varepsilon \xi)
\end{equation}
for every $\xi\in\left(-\frac{\pi}{\varepsilon},\frac{\pi}{\varepsilon}\right)$, so that
\begin{equation}
  E^{\varepsilon}[\Theta^\varepsilon]
  =
  \varepsilon \sum_{i=1}^{N_{\varepsilon}}
  \psi_{\varepsilon}\left(\frac{\theta_i^\varepsilon
    -\theta_{i-1}^\varepsilon}{\varepsilon} \right).
  \label{ee39}
\end{equation}

The {\em admissible set of angles} $T_{N_\varepsilon}$ is defined by
%
\begin{multline} 
  T_{N_\varepsilon}
  :=
  \left\{\Theta^\varepsilon\in\mathbb
    R^{N_\varepsilon}
    :
    \text{\eqref{ee44} is satisfied, }
    |\theta_{i}^\varepsilon-\theta_{i-1}^\varepsilon|<\pi
    \ \mbox{for}\
    i=2,\ldots,N_\varepsilon\vphantom{\sum_{i=1}^{N_\varepsilon}},
    \right. \\ 
    \left.
    \text{and }
    |\theta_{1}^\varepsilon-(\theta_{N}^\varepsilon-2\pi)|<\pi
  \phantom{\int\hspace{-3mm}}\right\}. \label{eq:adm} 
\end{multline}
Here for $\Theta^\varepsilon\in T_{N_\varepsilon}$, we define
$\theta_0^\varepsilon=\theta^\varepsilon_{N_\varepsilon}-2\pi$, 
which we need to compute the right-hand side of \eqref{ee39}.
We now consider the discrete minimization problem
\begin{equation}\label{eq:minprob}
\Theta_{\mathrm{min}}^\varepsilon
=
\argmin_{\Theta^\varepsilon\in
  T_{N_\varepsilon}}E^\varepsilon\left[\Theta^\varepsilon\right].
\end{equation}

Although the geometry
of our problem demands that the piecewise-affine curve $\mathcal C_\varepsilon$ associated with $\Theta^\varepsilon$ must not be self-intersecting,
we do not impose a corresponding condition on the members of
$T_{N_\varepsilon}$. Indeed, since we are interested in {\em
  minimizers} of $E^\varepsilon[\Theta^\varepsilon]$ over the
physically relevant admissible set $T^\prime_{N_\varepsilon}\subset
T_{N_\varepsilon}$, if we can show that a minimizer of
$E^\varepsilon[\Theta^\varepsilon]$ over the larger set
$T_{N_\varepsilon}$ is not self-intersecting, it is also a minimizer
of $E^\varepsilon[\Theta^\varepsilon]$ over
$T^\prime_{N_\varepsilon}$. 

The problem \eqref{eq:minprob} has a (unique) solution
that corresponds to a non self-intersecting curve in $\mathbb R^2$. Indeed, let $\tilde \Theta^\varepsilon$ satisfy $\tilde
\theta_i^\varepsilon-\tilde \theta_{i-1}^\varepsilon=
2\pi/N_\varepsilon$ for $i=1,\ldots,N_\varepsilon$. The chain corresponding to $\tilde
\Theta^\varepsilon$, which we denote by $\tilde{\mathcal
  C}^\varepsilon$, is the set of vertices for
a regular $N_\varepsilon$-sided convex polygon $\tilde{\mathcal C}_\varepsilon$ in $\mathbb
R^2$. It can
be easily verified that $\tilde \Theta^\varepsilon\in
T_{N_\varepsilon}$ and
\begin{equation}
\label{eq:polygon}
E^{\varepsilon}[\tilde \Theta^\varepsilon]= \varepsilon N_\varepsilon \psi_\varepsilon\left(\frac{2\pi}{\varepsilon N_\varepsilon}\right) = \psi_\varepsilon\left(2\pi\right).
\end{equation}
Furthermore, we have the following proposition.
\begin{proposition}
There exists an $\varepsilon_0>0$ such that $\tilde\Theta^\varepsilon$ is a minimizer of $E^\varepsilon$ over $T_{N_\varepsilon}$ when $\varepsilon<\varepsilon_0$.
\end{proposition}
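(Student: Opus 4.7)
The plan is to combine a uniform a priori bound on the turning angles $\alpha_i^\varepsilon:=\theta_i^\varepsilon-\theta_{i-1}^\varepsilon$ of any competing chain with Jensen's inequality applied on the interval where $\psi$ is strictly convex. The two geometric closure conditions \eqref{ee44} will not enter the argument, because the Jensen lower bound is already attained by $\tilde\Theta^\varepsilon$, which itself lies in $T_{N_\varepsilon}$.

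I would first fix $\delta_0\in(0,\pi)$ so that $\psi''>0$ on $[-\delta_0,\delta_0]$, which is possible by continuity of $\psi''$ together with $\psi''(0)>0$ from \eqref{eq:prop_psi}; on this interval $\psi$ is strictly convex. The identity $\psi'(\theta)=-\sin\theta\,f'(\cos\theta)$ together with \eqref{ee43} shows that $\psi$ is strictly increasing on $[0,\pi)$, so in particular $\psi(\delta_0)>0$. From \eqref{eq:polygon} and the even Taylor expansion of $\psi$ at zero,
\[
E^{\varepsilon}[\tilde\Theta^\varepsilon]=\varepsilon^{-2}\psi(2\pi\varepsilon)\longrightarrow 2\pi^2\psi''(0)\quad\text{as }\varepsilon\to 0,
\]
so there exists $\varepsilon_0>0$ with $\varepsilon\,E^{\varepsilon}[\tilde\Theta^\varepsilon]<\psi(\delta_0)$ for every $\varepsilon<\varepsilon_0$. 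If $\Theta^\varepsilon\in T_{N_\varepsilon}$ is any competitor with $E^{\varepsilon}[\Theta^\varepsilon]\le E^{\varepsilon}[\tilde\Theta^\varepsilon]$, then retaining only the largest term in \eqref{ee39} and using that $\psi$ is even and increasing on $[0,\pi)$ yields $\psi\!\left(\max_i|\alpha_i^\varepsilon|\right)\le \varepsilon\,E^{\varepsilon}[\Theta^\varepsilon]<\psi(\delta_0)$, hence $\max_i|\alpha_i^\varepsilon|<\delta_0$.

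Once the competitor's turning angles all lie in $(-\delta_0,\delta_0)$, I would invoke the $k=1$ convention built into $T_{N_\varepsilon}$, which gives $\sum_{i=1}^{N_\varepsilon}\alpha_i^\varepsilon=\theta_{N_\varepsilon}^\varepsilon-\theta_0^\varepsilon=2\pi$. Jensen's inequality for the strictly convex function $\psi$ then produces
\[
\frac{1}{N_\varepsilon}\sum_{i=1}^{N_\varepsilon}\psi(\alpha_i^\varepsilon)\ge\psi\!\left(\frac{1}{N_\varepsilon}\sum_{i=1}^{N_\varepsilon}\alpha_i^\varepsilon\right)=\psi(2\pi\varepsilon),
\]
and multiplying by $N_\varepsilon/\varepsilon$ yields $E^{\varepsilon}[\Theta^\varepsilon]\ge \varepsilon^{-2}\psi(2\pi\varepsilon)=E^{\varepsilon}[\tilde\Theta^\varepsilon]$. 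Since $\tilde\Theta^\varepsilon$ attains this bound and belongs to $T_{N_\varepsilon}$, it is a minimizer.

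The main difficulty, and the reason the conclusion only holds for sufficiently small $\varepsilon$, is the a priori step: Jensen applies only in the neighborhood where $\psi$ is convex, whereas the admissible set permits turning angles arbitrarily close to $\pm\pi$, where $\psi$ is certainly not convex. The prefactor $1/\varepsilon$ in \eqref{ee39} is what rules out a single large turning angle in the low-$\varepsilon$ regime and makes this reduction to a local convexity argument possible.
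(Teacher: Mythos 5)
Your proof is correct and follows essentially the same route as the paper: an a priori bound forcing every turning angle of a low-energy competitor into the neighborhood where $\psi$ is convex, followed by the discrete Jensen inequality together with $\sum_i(\theta_i^\varepsilon-\theta_{i-1}^\varepsilon)=2\pi$ to show $E^\varepsilon[\Theta^\varepsilon]\geq\psi_\varepsilon(2\pi)=E^\varepsilon[\tilde\Theta^\varepsilon]$. The only cosmetic difference is that you deduce the angle bound from the monotonicity of $\psi$ on $[0,\pi)$, whereas the paper appeals directly to \eqref{ee43} and \eqref{eq:prop_psi}; both yield the same conclusion.
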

\begin{proof}
By \eqref{eq:prop_psi} we can choose $\delta>0$ such that the function $\psi$ is convex on the interval $(-\delta,\delta).$ Suppose that $M\geq\psi_\varepsilon(2\pi)$ and consider an arbitrary $\Theta^\varepsilon\in T_{N_\varepsilon}$ satisfying $E^\varepsilon\left[\Theta^\varepsilon\right]\leq M$. Then 
\[\varepsilon M\geq\varepsilon^2
  \psi_{\varepsilon}\left(\frac{\theta_i^\varepsilon
    -\theta_{i-1}^\varepsilon}{\varepsilon} \right)=\psi\left(\theta_i^\varepsilon
    -\theta_{i-1}^\varepsilon\right)\geq0,\]
and by \eqref{ee43} and \eqref{eq:prop_psi} it follows that
\begin{equation}
\label{eq:1.24}
\left|\theta_i^\varepsilon
    -\theta_{i-1}^\varepsilon\right|\leq \delta,
\end{equation}
for every $i=1,\ldots,N_\varepsilon$ uniformly in $\varepsilon$ when $\varepsilon<\varepsilon_0$ and $\varepsilon_0$ is sufficiently small.
Using the discrete version of Jensen's inequality, we have 
$$E^\varepsilon[\Theta^\varepsilon]=\varepsilon\sum_{i=1}^{N_\varepsilon}
\psi_\varepsilon\left(\frac{\theta_i^\varepsilon -
  \theta_{i-1}^\varepsilon}{\varepsilon}\right)\geq \varepsilon
N_\varepsilon \psi_\varepsilon \left( \sum_{i=1}^{N_\varepsilon}
\frac{\theta_i^\varepsilon - \theta_{i-1}^\varepsilon}{\varepsilon
  N_\varepsilon}\right)=\psi_\varepsilon \left(2\pi\right) =
E^\varepsilon[\tilde \Theta^\varepsilon],$$ for any
$\Theta^\varepsilon\in T_{N_\varepsilon}$ as long as $\varepsilon<\varepsilon_0$.  We conclude that the
minimum of $E^\varepsilon$ is achieved at $\tilde
\Theta^\varepsilon$.
\end{proof}

\section{Continuum Formulation} \label{s2}

To motivate the subsequent developments, observe that \eqref{eq:prop_psi}, \eqref{eq:polygon}, and the smallness of $\varepsilon$ give
\[E^{\varepsilon}[\tilde \Theta^\varepsilon]=  \psi_\varepsilon\left(2\pi\right) = \frac{1}{\varepsilon^2} \psi(2\pi\varepsilon)=2\pi^2\psi^{\prime\prime}(0) + o(1).\]
It follows that
\begin{equation}
\label{eq:elim_poly}
\lim_{\varepsilon\to0}\left\{\inf_{\Theta^\varepsilon\in T_{N_\varepsilon}} E^\varepsilon [\Theta^\varepsilon]\right\}=\lim_{\varepsilon\to0}E^\varepsilon [\tilde\Theta^\varepsilon]=2\pi^2\psi^{\prime\prime}(0)
\end{equation}
and therefore
\begin{equation}
\inf_{\Theta^\varepsilon\in T_{N_\varepsilon}} E^\varepsilon[\Theta^\varepsilon]<C,
\end{equation}
for some $C>0$ uniformly in $\varepsilon$ when $\varepsilon$ is small
enough.  As noted in the previous
section, the curve $\tilde{\mathcal C}_\varepsilon$ associated with the
minimizer $\tilde\Theta^\varepsilon$ is a regular
$N_\varepsilon$-sided convex polygon in $\mathbb{R}^2$.  When
$\varepsilon\to0$, the number of sides
$N_\varepsilon=1/\varepsilon\to\infty$ while the total perimeter of
the polygon remains equal to $\varepsilon N_\varepsilon=1.$ The
sequence $\left\{\tilde{\mathcal
  C}_\varepsilon\right\}_{\varepsilon>0}$ thus converges (uniformly) to a circle
$\mathcal C_0$ of radius $\frac{1}{2\pi}$ when $\varepsilon\to0$. 
Based on \eqref{eq:elim_poly}, 
it seems natural to associate to this limiting circle $\mathcal C_0$
the energy $E_0[\mathcal C_0]:=2\pi^2\psi^{\prime\prime}(0)$.

Further, given an arbitrary smooth, simple curve $\mathcal
C\subset\mathbb R^2$ such that there exists a sequence of 
piecewise-affine curves
$\mathcal C_\varepsilon$ (which corresponds to a sequence of chains
$\mathcal C^\varepsilon$) converging to $\mathcal C ,$ it might be
tempting to extend the notion of energy to $\mathcal C$ by defining
$E_0[\mathcal C]:=\lim_{\varepsilon\to0} E_\varepsilon[\mathcal
  C_\varepsilon]$. However, a priori it is not clear that this limit
exists or if its value is the same for all sequences of chains
converging to $\mathcal C$. In addition, if the notion of the limiting
energy for curves can be made precise, it would be desirable that
minimizers of the discrete problem
$\Theta_{\mathrm{min}}^\varepsilon=\argmin_{\Theta^\varepsilon\in
  T_{N_\varepsilon}}E^\varepsilon\left[\Theta^\varepsilon\right]$ for
chains converge to minimizers of the limiting energy $E_0$ over an
appropriate function space. The established framework to study
convergence of energies that preserves the variational structure of
the discrete problem is that of $\Gamma$-convergence, which we consider next.


Before proving $\Gamma$-convergence, we need to select a common function space that contains both the discrete chains $C^\varepsilon$ and the limiting curves. Note first that, since the limiting energy should correspond to Euler's elastica, the curvature of a limiting curve $\mathcal C$ must be square integrable, i.e., the angle function for $\mathcal C$ must be an element of the Sobolev space $H^1([0,1])$. Our goal is to show that the Euler's elastica energy of a closed curve with an angle function in $H^1([0,1])$ is the limit of the discrete energies of a sequence of chains.
However, the angle function for a chain is a step function and hence
is not in $H^1([0,1])$. 
To put our construction into the framework of $\Gamma$-convergence, we use the idea of \cite{braides2002gamma} and replace a sequence of piecewise-constant angle functions for chains by a sequence of piecewise-affine functions in $H^1([0,1])$. We then introduce an energy functional defined over the piecewise-affine functions so that for each $\varepsilon$, the new energy of each piecewise-affine function is the same as the old discrete energy of a corresponding chain.  
This yields a sequence of affine functions in $H^1([0,1])$.
Note that the affine functions considered below do not need to correspond to a closed curve on the plane and are not required to have length one.  The physically relevant geometric constraints are imposed only on the piecewise-constant angle functions for the discrete chains and on the limiting angle function.

To make these ideas more precise, consider a partition of the interval $[0,1]$ by the
points \[\left\{0, \varepsilon/2, 3\varepsilon/2, \ldots,
1-\varepsilon/2,1\right\}\] and denote by
$\tilde A_{\varepsilon}(0,1)\subset C([0,1])$ the set of functions
affine on each subinterval of this partition. From
now on, we will identify with a vector $\Theta^\varepsilon\in
T_{N_\varepsilon}$ a piecewise-affine function $\theta_\varepsilon\in
\tilde A_{\varepsilon}(0,1)$ given by
\begin{multline}
\label{eq:theta_fun}
\theta_\varepsilon(s):=\left(\frac{\theta^\varepsilon_1-2\pi+\theta_{N_\varepsilon}^\varepsilon}{2}+\frac{s}{\varepsilon}\left(\theta^\varepsilon_1+2\pi-\theta_{N_\varepsilon}^\varepsilon\right)\right)\chi_{\left[0,\frac{\varepsilon}{2}\right)}(s) \\
+\sum_{i=1}^{N_\varepsilon-1}\left(\theta_i^\varepsilon+\frac{\theta_{i+1}^\varepsilon-\theta_i^\varepsilon}{\varepsilon}\left(s-\frac{2i-1}{2}\varepsilon \right)\right)\chi_{\left[\frac{2i-1}{2}\varepsilon,\frac{2i+1}{2}\varepsilon\right)}(s) \\
+\left(\frac{\theta_{1}^\varepsilon+2\pi+\theta_{N_\varepsilon}^\varepsilon}{2}+\frac{s-1}{\varepsilon}\left(\theta_{1}^\varepsilon+2\pi-\theta_{N_\varepsilon}^\varepsilon\right)\right)\chi_{\left[1-\frac{\varepsilon}{2},1\right]}(s)
\end{multline}
for $s\in[0,1]$, where $\chi_S$ is the indicator function of the set $S\subset\mathbb R$. Then
\begin{equation}
\label{eq:Thet_thet}
\theta_\varepsilon\left(\frac{2i-1}{2}\varepsilon\right)=\theta^\varepsilon_i
\end{equation}
for $i=1,\ldots,N_\varepsilon$ and
\begin{equation}
\label{eq:degree}
  \theta_\varepsilon(0)
  =
  \frac{\theta^\varepsilon_1+\theta_{N_\varepsilon}^\varepsilon}{2}-\pi,
  \quad 
  \theta_\varepsilon(1)=\frac{\theta_{1}^\varepsilon+\theta_{N_\varepsilon}^\varepsilon}{2}+\pi,    
\end{equation}
so that $\theta_\varepsilon(1)=\theta_\varepsilon(0)+2\pi$.

We use \eqref{eq:Thet_thet}--\eqref{eq:degree} and the definition of
$T_{N_\varepsilon}$ to define the {\em admissible set of functions}
\begin{multline}
\label{eq:adm_fun}
A_{\varepsilon}(0,1):=\left\{\theta\in \tilde A_{\varepsilon}(0,1)\colon \theta(0)=\frac{\theta(\varepsilon/2)+\theta(1-\varepsilon/2)}{2}-\pi,\right.\\\left.\theta(1)=\frac{\theta(\varepsilon/2)+\theta(1-\varepsilon/2)}{2}+\pi;\left(\theta\left(\varepsilon/2\right),\ldots,\theta\left(1-\varepsilon/2\right)\right)\in T_{N_\varepsilon}\right\}.
\end{multline}
Note that for $\theta_\varepsilon\in A_{\varepsilon}(0,1)$, we have
\begin{equation}
\label{eq:theta_fun_der}
\theta^\prime_\varepsilon(s):=\frac{\theta_{1}^\varepsilon+2\pi-\theta_{N_\varepsilon}^\varepsilon}{\varepsilon}\left(\chi_{\left[0,\frac{\varepsilon}{2}\right)}(s)+\chi_{\left(1-\frac{\varepsilon}{2},1\right]}(s)\right)
+\sum_{i=1}^{N_\varepsilon-1}\frac{\theta_{i+1}^\varepsilon-\theta_i^\varepsilon}{\varepsilon}\chi_{\left(\frac{2i-1}{2}\varepsilon,\frac{2i+1}{2}\varepsilon\right)}(s)
\end{equation}
for all $s\in[0,1]$, where
$\theta_i^\varepsilon=\theta_{\varepsilon}\left((2i-1)\varepsilon/2\right)$
for $i=1,\ldots,N_{\varepsilon}$.
It follows from \eqref{eq:theta_fun_der} that
\begin{equation}
  \int_0^1\psi_\varepsilon\left(\theta_\varepsilon^\prime\right)\,ds
  =
  \varepsilon\sum_{i=1}^{N_\varepsilon}
  \psi_\varepsilon\left(\frac{\theta^\varepsilon_{i}-\theta_{i-1}^\varepsilon}{\varepsilon}\right)
  =
  E^\varepsilon\left[\Theta^\varepsilon\right],  \label{eq:int}
\end{equation}
for every $\theta_\varepsilon\in A_\varepsilon(0,1)$,
where $\theta_0^\varepsilon=\theta_{N_{\varepsilon}}^\varepsilon-2\pi$.

If we define the functional $F_\varepsilon:H^1\left([0,1];\mathbb{R}\right)\to \bar{\mathbb{R}}$ by
\begin{equation}
\label{eq:ffun}
F_\varepsilon[\theta]:=\left\{
\begin{array}{ll}
\int_0^1\psi_\varepsilon\left(\theta^\prime\right)\,ds, & \theta\in A_\varepsilon(0,1), \\
\infty, & \mathrm{otherwise},
\end{array}
\right.
\end{equation}
for every $\varepsilon>0$, then \eqref{eq:int} implies that
\begin{equation}
\label{eq:equiv}
F_\varepsilon\left[\theta_\varepsilon\right]=E^\varepsilon\left[\Theta^\varepsilon\right],
\end{equation}
whenever $\theta_\varepsilon\in A_\varepsilon(0,1)$, where
$\Theta^\varepsilon\in\mathbb{R}^{N_\varepsilon}$ is the vector
corresponding to $\theta_\varepsilon$.  The discrete minimization
problem \eqref{eq:minprob} has an associated continuum minimization
problem
\begin{equation}
  \label{eq:contminprob}
  \theta_{\varepsilon,\min}
  =
  \argmin_{\theta\in H^1([0,1])}F_\varepsilon\left[\theta\right].
\end{equation}
Because the functionals $\{ F_\varepsilon \}_{\varepsilon>0}$ in
\eqref{eq:contminprob} are all defined on the same space $H^1([0,1])$,
an asymptotic limit of $\{ F_\varepsilon \}_{\varepsilon>0}$ can be studied using
$\Gamma$-convergence.
%

\section{The $\Gamma$-Limit} \label{s3}

In this section we state and prove the asymptotic limit of the
sequence of continuum energies $\{ F_\varepsilon \}_{\varepsilon>0}$. We state two lemmas, whose proofs are in Appendices 1 and 2. The first lemma shows that the constraints imposed on the piecewise-constant angle functions are preserved under the weak-$H^1$ convergence.

\begin{lemma} \label{r1}
Suppose $\{\theta_\varepsilon\}_{\varepsilon>0}$ converges
weakly to $\theta$ in $H^1([0,1])$ as $\varepsilon\rightarrow 0$.  If
there is a sequence $\{ \varepsilon_{n}  \}$ of positive numbers such that 
$\underset{n\rightarrow \infty}{\lim }\varepsilon_{n}=0$ and
$\theta_{\varepsilon_{n}}\in A_{\varepsilon_{n}}(0,1)$ for all $n$, 
then 
\begin{equation}
  \int_0^1\cos(\theta)\,ds
  =
  \int_0^1\sin(\theta)\,ds
  =0
  \quad
  \text{and}
  \quad
  \theta(1)-\theta(0)=2\pi.  \label{ee45}
\end{equation}
\end{lemma}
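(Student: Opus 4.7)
The plan is to combine two standard tools: the compact embedding $H^{1}([0,1])\hookrightarrow C([0,1])$, which upgrades the weak $H^{1}$ convergence along $\{\varepsilon_{n}\}$ to uniform convergence $\theta_{\varepsilon_{n}}\to\theta$, and the $H^{1}$-boundedness of the sequence, which gives a uniform bound on $\|\theta_{\varepsilon_{n}}'\|_{L^{2}}$. With these in hand, the identity $\theta(1)-\theta(0)=2\pi$ is immediate: by the definition of $A_{\varepsilon_{n}}(0,1)$ in \eqref{eq:adm_fun}, each $\theta_{\varepsilon_{n}}$ satisfies $\theta_{\varepsilon_{n}}(1)-\theta_{\varepsilon_{n}}(0)=2\pi$, and the uniform convergence transfers this identity to the limit.

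For the two integral identities I would compare $\theta_{\varepsilon}$ with its piecewise-constant companion $\bar\theta_{\varepsilon}(s):=\theta_{i}^{\varepsilon}$ for $s\in[(i-1)\varepsilon,i\varepsilon)$. By the closure constraint \eqref{ee44}, $\int_{0}^{1}\cos\bar\theta_{\varepsilon_{n}}\,ds=\int_{0}^{1}\sin\bar\theta_{\varepsilon_{n}}\,ds=0$ exactly, so the task reduces to showing that $\|\theta_{\varepsilon_{n}}-\bar\theta_{\varepsilon_{n}}\|_{L^{1}(0,1)}\to 0$; since cosine and sine are $1$-Lipschitz, this suffices. Combined with the uniform convergence $\theta_{\varepsilon_{n}}\to\theta$, the desired identities $\int_{0}^{1}\cos\theta=\int_{0}^{1}\sin\theta=0$ then follow by passing to the limit.

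The main technical step, and the principal obstacle, is this last approximation estimate. Setting $d_{i}^{\varepsilon}:=\theta_{i+1}^{\varepsilon}-\theta_{i}^{\varepsilon}$ with $\theta_{0}^{\varepsilon}:=\theta_{N_{\varepsilon}}^{\varepsilon}-2\pi$, the expression \eqref{eq:theta_fun_der} yields $\|\theta_{\varepsilon}'\|_{L^{2}}^{2}=\varepsilon^{-1}\sum_{i}(d_{i}^{\varepsilon})^{2}$, so the uniform $H^{1}$ bound forces $\sum_{i}(d_{i}^{\varepsilon})^{2}\le C\varepsilon$. On the interval $[(i-1)\varepsilon,i\varepsilon]$, the piecewise-affine function $\theta_{\varepsilon}$ interpolates linearly between the values $(\theta_{i-1}^{\varepsilon}+\theta_{i}^{\varepsilon})/2$, $\theta_{i}^{\varepsilon}$ and $(\theta_{i}^{\varepsilon}+\theta_{i+1}^{\varepsilon})/2$, and therefore deviates from $\theta_{i}^{\varepsilon}$ by at most $\tfrac{1}{2}\max(|d_{i-1}^{\varepsilon}|,|d_{i}^{\varepsilon}|)$. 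Summing and inserting the control on $\sum(d_{i}^{\varepsilon})^{2}$ yields $\|\theta_{\varepsilon}-\bar\theta_{\varepsilon}\|_{L^{2}(0,1)}^{2}\le C\varepsilon^{2}$, which is more than enough for $L^{1}$ convergence to zero. The conceptual point is that choosing the right piecewise-constant comparison function converts the discrete closure sum into an exact integral identity, and that the $H^{1}$ bound then provides precisely the control on nearest-neighbour differences needed to absorb the interpolation error.
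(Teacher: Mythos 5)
Your proof is correct, and it shares the paper's overall skeleton: the compact embedding $H^{1}([0,1])\hookrightarrow C([0,1])$ upgrades the weak convergence to uniform convergence, the endpoint identity $\theta(1)-\theta(0)=2\pi$ passes to the limit directly from the definition of $A_{\varepsilon_n}(0,1)$, and the integral identities are obtained by replacing $\theta_{\varepsilon_n}$ locally by the nodal values $\theta_i^{\varepsilon_n}$ so that the closure constraint \eqref{ee44} kills the leading term. Where you genuinely differ is in the control of the replacement error. The paper bounds $\sum_i\int_{s_i}^{s_{i+1}}\bigl(\cos\theta_n(s)-\cos\theta_n(s_i)\bigr)\,ds$ qualitatively, using the uniform convergence of $\{\theta_n\}$ together with the uniform continuity of the limit $\theta$; you instead read off from \eqref{eq:theta_fun_der} that $\sum_i(\theta_i^{\varepsilon}-\theta_{i-1}^{\varepsilon})^2\le C\varepsilon$ for an $H^1$-bounded sequence and convert this, via the explicit hat-function geometry of the interpolant, into the quantitative estimate $\|\theta_{\varepsilon}-\bar\theta_{\varepsilon}\|_{L^2}=O(\varepsilon)$, after which the $1$-Lipschitz bound on $\cos$ and $\sin$ finishes the job. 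Your route buys an explicit convergence rate and does not invoke the regularity of the limit in the error term, at the cost of a slightly more computational interpolation estimate; the paper's argument is softer. A minor point in your favor: your cells $[(i-1)\varepsilon,i\varepsilon)$ are centered at the nodes $s_i$ and tile $[0,1)$ with all $N_\varepsilon$ values appearing once, so $\int_0^1\cos\bar\theta_{\varepsilon_n}\,ds$ vanishes exactly by \eqref{ee44}, whereas the paper's sum $\varepsilon\sum_{i=1}^{N_n-1}\cos\theta_n(s_i)$ omits the $i=N_n$ term and is really only $O(\varepsilon_n)$ rather than identically zero (harmless, but your bookkeeping is cleaner).
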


The second lemma establishes that any function in $H^1([0,1])$ satisfying \eqref{ee45} can be approximated by a twice continuously differentiable function on $[0,1]$ that also satisfies \eqref{ee45}.
\begin{lemma} \label{r2}
Suppose $\theta\in H^1([0,1])$ satisfies \eqref{ee45}.  Then for all
$\delta>0$ there is a function $\theta^{*}\in C^{2}([0,1])$ such
that $\|\theta-\theta^{*}\|_{H^{1}([0,1])}<\delta$ and 
$\theta^{*}$ also satisfies \eqref{ee45}.
\end{lemma}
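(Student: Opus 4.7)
The plan is to smoothly approximate $\theta$ in $H^1$ and then make a two-parameter correction to restore the integral constraints. First, I reduce to a compactly supported perturbation: by the winding condition in \eqref{ee45}, the function $\eta(s) := \theta(s) - 2\pi s - \theta(0)$ lies in $H^1_0([0,1])$. Using density of $C^\infty_c(0,1)$ in $H^1_0([0,1])$, pick $\eta_n \in C^\infty_c(0,1)$ with $\eta_n \to \eta$ in $H^1$ and set $\tilde\theta_n(s) := 2\pi s + \theta(0) + \eta_n(s)$. Then $\tilde\theta_n \in C^\infty([0,1])$, the winding identity $\tilde\theta_n(1) - \tilde\theta_n(0) = 2\pi$ holds by construction, and $\tilde\theta_n \to \theta$ in $H^1$, hence uniformly by the one-dimensional Sobolev embedding.

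Next I restore the two integral constraints while preserving smoothness and the winding. Choose $\phi_1, \phi_2 \in C^\infty_c(0,1)$ such that
\[
J := \begin{pmatrix}
-\int_0^1 \sin\theta\,\phi_1\,ds & -\int_0^1 \sin\theta\,\phi_2\,ds \\
\hphantom{-}\int_0^1 \cos\theta\,\phi_1\,ds & \hphantom{-}\int_0^1 \cos\theta\,\phi_2\,ds
\end{pmatrix}
\]
is nonsingular. Such a pair exists because otherwise the linear map $\phi \mapsto \bigl(-\int_0^1 \sin\theta\,\phi\,ds,\,\int_0^1 \cos\theta\,\phi\,ds\bigr)$ on $C^\infty_c(0,1)$ would have rank less than $2$, yielding a nonzero $(\alpha,\beta)$ with $\alpha\sin\theta + \beta\cos\theta = 0$ almost everywhere. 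Since $\theta$ is continuous, this would trap $\theta$ in a discrete set of values spaced by $\pi$ and hence force $\theta$ to be constant, contradicting $\theta(1)-\theta(0)=2\pi$.

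Then, to carry out the correction, define $\theta_n^*(s) := \tilde\theta_n(s) + a\phi_1(s) + b\phi_2(s)$ and
\[
F_n(a,b) := \Bigl(\int_0^1 \cos\theta_n^*\,ds,\ \int_0^1 \sin\theta_n^*\,ds\Bigr).
\]
Since $\tilde\theta_n \to \theta$ uniformly, $F_n(0,0) \to (0,0)$ and $DF_n(0,0) \to J$, so for $n$ large $DF_n(0,0)$ is invertible with uniformly bounded inverse. A quantitative implicit function / contraction-mapping argument then produces $(a_n, b_n)$ with $|a_n| + |b_n| \lesssim |F_n(0,0)| \to 0$ solving $F_n(a_n, b_n) = (0,0)$. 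Because $\phi_1$ and $\phi_2$ vanish at the endpoints, the identity $\theta_n^*(1)-\theta_n^*(0)=2\pi$ survives. Taking $n$ large enough gives
\[
\|\theta - \theta_n^*\|_{H^1} \le \|\theta - \tilde\theta_n\|_{H^1} + |a_n|\,\|\phi_1\|_{H^1} + |b_n|\,\|\phi_2\|_{H^1} < \delta,
\]
so $\theta^* := \theta_n^* \in C^\infty([0,1]) \subset C^2([0,1])$ is the required approximation.

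The main obstacle is the nondegeneracy of $J$; once that is established, the implicit function theorem handles the nonlinearity in a routine way. The nondegeneracy ultimately reduces to the nontrivial winding $\theta(1)-\theta(0) = 2\pi \neq 0$ combined with the $H^1$-continuity of $\theta$.
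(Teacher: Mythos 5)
Your proof is correct, but it takes a genuinely different route from the paper's. The paper's proof is a hands-on topological argument: after normalizing $\theta(0)=0$, it uses the intermediate value theorem to locate subintervals where $\theta\in(\pi/2,\pi)$ and $\theta\in(\pi,3\pi/2)$, builds four explicit bump-function perturbations $\theta_1,\dots,\theta_4$ that push $\theta$ up or down on those subintervals, shows that the constraint map $H(\delta_1,\delta_2)=G[\delta_1(\delta_2\theta_1+(1-\delta_2)\theta_4)+(1-\delta_1)(\delta_2\theta_2+(1-\delta_2)\theta_3)]$ has definite signs on opposite edges of the parameter square, smooths each $\theta_i$ by periodic extension and mollification (re-verifying the sign conditions survive), and then extracts a zero of both components by a Poincar\'e--Miranda-style double application of the intermediate value theorem. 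You instead smooth first --- exploiting that $\theta(s)-2\pi s-\theta(0)\in H^1_0$ so that density of $C^\infty_c(0,1)$ preserves the winding constraint for free --- and then restore the two integral constraints by a two-parameter Newton/implicit-function correction, with solvability resting on the nondegeneracy of the linearized constraint map; your duality argument for that nondegeneracy (a degenerate Jacobian for every pair $\phi_1,\phi_2$ would force $\alpha\sin\theta+\beta\cos\theta=0$ a.e., trapping the continuous $\theta$ in $\pi\mathbb{Z}-\varphi$ and contradicting the winding) is clean and correct. The paper's approach is entirely elementary (nothing beyond the IVT) and explicit, at the cost of considerable bookkeeping; yours is shorter, makes transparent that the only obstruction is degeneracy of the linearization (ruled out precisely by the nontrivial winding), and would generalize readily to other integral constraints, at the cost of invoking a quantitative implicit function theorem with uniform control of $DF_n(0,0)^{-1}$ and of the second derivatives of $F_n$ --- both of which you correctly note are available here since the $\phi_i$ are fixed and bounded.
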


Now we state the main result of this paper.
\begin{theorem}[$\Gamma$-convergence]
Let $F_\varepsilon \colon$$H^1([0,1])\to\bar{\mathbb{R}}$ be
defined by \eqref{eq:ffun}.
Let $E_0 \colon$$H^1([0,1])\to\bar{\mathbb{R}}$ be defined by
\begin{equation}
\label{eq:ffun0}
  E_0(\theta)
  :=
  \left\{
  \begin{array}{ll}
    \int_0^1 \alpha(\theta')^2\,ds, & \theta\in H^1_c([0,1]), \\
    \infty, & \mathrm{otherwise},
  \end{array}
  \right.
\end{equation}
where $\alpha = \psi''(0)/2$ and
\begin{equation}
  H^1_c([0,1])
  :=
  \left\{\theta\in H^1([0,1])\ :\ \theta \ \text{satisfies}\ \eqref{ee45}
  \right\}.
  \label{eq:cont_adm_fun}
\end{equation}
%
%
%
Then $\Gamma\text{-}\lim_{\varepsilon\to 0} F_{\varepsilon} = E_0$ in the weak topology of $H^1([0,1])$, that is
\smallskip
\begin{enumerate}[label=(\alph*)]
\item For every $\theta \in H^1([0,1])$, there exists a
sequence $\{\theta_\varepsilon\}_{\varepsilon>0}$ converging weakly to $\theta$ in
$H^1([0,1])$ such that $\lim_{\varepsilon\to 0}
F_\varepsilon[\theta_\varepsilon] = E_0[\theta]$. \smallskip
\item For every
sequence $\{\theta_\varepsilon\}_{\varepsilon>0}$ converging weakly to $\theta$ in
$H^1([0,1])$, \[\liminf_{\varepsilon\to 0}
F_\varepsilon[\theta_\varepsilon] \geq E_0[\theta].\]
\end{enumerate}
\smallskip
Furthermore, if a sequence $\left\{\theta_\varepsilon\right\}_{\varepsilon>0}\subset H^1([0,1])$ satisfies a uniform energy bound $F_\varepsilon\left[\theta_\varepsilon\right]<C$ then there is a subsequence $\{\theta_{\varepsilon_j}\}$ such that $\theta_{\varepsilon_j}\stackrel{H^1}\rightharpoonup\theta$ as $\varepsilon_j\to 0$ for some $\theta\in H^1([0,1]).$
\end{theorem}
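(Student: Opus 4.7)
The plan is to address compactness, the $\liminf$ bound~(b), and the recovery sequence~(a) in turn, using Lemmas~\ref{r1} and~\ref{r2} as the main ingredients. For compactness, the first step is a uniform lower bound $\psi(\theta)\ge c\,\theta^{2}$ on $(-\pi,\pi)$ for some $c>0$: this follows from $\psi(0)=0$, $\psi''(0)=2\alpha>0$, $\psi>0$ off the origin, and $\psi\to\infty$ at $\pm\pi$, which together force the continuous function $\psi(\theta)/\theta^{2}$ to be bounded away from zero on $(-\pi,\pi)$. Rescaling gives $\psi_\varepsilon(\xi)\ge c\,\xi^{2}$, so the bound $F_\varepsilon[\theta_\varepsilon]<C$ yields $\|\theta_\varepsilon'\|_{L^2}^{2}\le C/c$. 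The normalization $\theta_1^\varepsilon\in(-\pi,\pi]$ from Section~\ref{s1} bounds $\theta_\varepsilon(\varepsilon/2)$ and hence $\theta_\varepsilon$ in $L^\infty$, so $\theta_\varepsilon$ is uniformly bounded in $H^{1}$ and Banach--Alaoglu produces the weakly convergent subsequence.

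For part~(b), I may assume $\liminf_\varepsilon F_\varepsilon[\theta_\varepsilon]<\infty$ and pass to a realizing subsequence, along which $\theta_\varepsilon\in A_\varepsilon(0,1)$, so Lemma~\ref{r1} places $\theta\in H^1_c$. The crucial quantitative fact is that $F_\varepsilon[\theta_\varepsilon]\le C$ implies $\psi(\theta_i^\varepsilon-\theta_{i-1}^\varepsilon)\le C\varepsilon$ pointwise in $i$, and the quadratic behavior of $\psi$ near zero forces $|\theta_i^\varepsilon-\theta_{i-1}^\varepsilon|=O(\sqrt{\varepsilon})$ uniformly. Hence $\varepsilon\theta_\varepsilon'(s)\to 0$ uniformly in $s$, and Taylor expansion gives
\[
  \psi_\varepsilon(\theta_\varepsilon'(s))=\varepsilon^{-2}\psi\bigl(\varepsilon\theta_\varepsilon'(s)\bigr)=(\alpha+o(1))\,\bigl(\theta_\varepsilon'(s)\bigr)^{2}
\]
with $o(1)$ uniform in $s$. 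Integrating yields $F_\varepsilon[\theta_\varepsilon]=(\alpha+o(1))\|\theta_\varepsilon'\|_{L^2}^{2}$, and weak lower semicontinuity of the $L^{2}$ norm closes the argument: $\liminf F_\varepsilon[\theta_\varepsilon]\ge\alpha\|\theta'\|_{L^2}^{2}=E_0[\theta]$.

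For part~(a), if $\theta\notin H^1_c$ the constant sequence $\theta_\varepsilon\equiv\theta$ works, since by Lemma~\ref{r1} it eventually leaves $A_\varepsilon(0,1)$, giving $F_\varepsilon[\theta_\varepsilon]=\infty=E_0[\theta]$. For $\theta\in H^1_c$, Lemma~\ref{r2} combined with a diagonal argument reduces matters to $\theta^{*}\in C^{2}([0,1])\cap H^1_c$. The natural candidate is the piecewise-affine interpolant from~\eqref{eq:theta_fun} with vertex values $\theta_i^\varepsilon:=\theta^{*}((2i-1)\varepsilon/2)$; $C^{2}$-regularity of $\theta^{*}$ gives $|\theta_i^\varepsilon-\theta_{i-1}^\varepsilon|=O(\varepsilon)$, so that $\theta_\varepsilon\to\theta^{*}$ strongly in $H^{1}$ and the Taylor expansion from part~(b) yields $F_\varepsilon[\theta_\varepsilon]\to E_0[\theta^{*}]$.

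The main obstacle—flagged by the authors in the introduction—is that these raw samples violate the closure constraints~\eqref{ee44}: the midpoint rule for $\int_0^1\cos\theta^{*}\,ds=\int_0^1\sin\theta^{*}\,ds=0$ leaves residuals $\sum_i\cos\theta_i^\varepsilon,\sum_i\sin\theta_i^\varepsilon=O(\varepsilon)$. Because $\theta^{*}$ sweeps through a full turn, continuity furnishes two fixed indices $i_1,i_2$ with $|\sin(\theta_{i_1}^\varepsilon-\theta_{i_2}^\varepsilon)|\ge\tfrac12$ for small $\varepsilon$, so the Jacobian of the map $(a,b)\mapsto\bigl(\sum\cos,\sum\sin\bigr)$ induced by perturbations $\theta_{i_1}^\varepsilon\mapsto\theta_{i_1}^\varepsilon+a$ and $\theta_{i_2}^\varepsilon\mapsto\theta_{i_2}^\varepsilon+b$ is uniformly invertible; the quantitative inverse function theorem then produces corrections $(a^{*},b^{*})=O(\varepsilon)$ enforcing~\eqref{ee44} exactly. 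Altering two samples by $O(\varepsilon)$ changes only four increments of the interpolant, each by $O(\varepsilon)$, so the change in $F_\varepsilon$ is $O(\varepsilon)$ and the change in $H^{1}$-norm is $O(\sqrt{\varepsilon})$; the corrected sequence therefore still converges weakly in $H^{1}$ to $\theta^{*}$ with $F_\varepsilon\to E_0[\theta^{*}]$, and a final diagonal argument delivers the recovery sequence for $\theta$.
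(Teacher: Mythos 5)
Your proposal is correct, and while compactness and the lower bound (b) follow essentially the paper's own route, your recovery sequence in (a) takes a genuinely different path. For compactness and (b) the only real difference is cosmetic: you use a single global bound $\psi(\theta)\ge c\,\theta^{2}$ on $(-\pi,\pi)$ (valid because $\psi(\theta)/\theta^{2}$ extends continuously to $0$ with value $\psi''(0)/2$ and blows up at $\pm\pi$), whereas the paper first shows the increments are $o(1)$ and then invokes the local quadratic behavior; both yield the $L^{2}$ bound on $\theta_\varepsilon'$, the uniform smallness $\varepsilon\theta_\varepsilon'=O(\sqrt\varepsilon)$, the Taylor expansion of $\psi_\varepsilon$, and weak lower semicontinuity. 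The substantive divergence is in how the closure constraint \eqref{ee44} is enforced on the approximating chain. The paper's construction is geometric: it inscribes a closed equilateral $N_\varepsilon$-gon in the inflated curve $\bar r_h=\bar r+h\bar N$, so closure is automatic, and the work goes into the asymptotics $h=O(\varepsilon^{2})$, $\bar s_{i+1}-\bar s_i=\varepsilon+O(\varepsilon^{3})$, $\bar\theta_i=\theta(\bar s_i+\varepsilon/2)+O(\varepsilon^{2})$. You instead sample $\theta^{*}$ at the midpoints, accept an $O(\varepsilon)$ defect in $\sum_i\cos\theta_i^\varepsilon$ and $\sum_i\sin\theta_i^\varepsilon$, and remove it by perturbing two samples $\theta_{i_1}^\varepsilon,\theta_{i_2}^\varepsilon$; the Jacobian of the induced map is $\sin(\theta_{i_2}^\varepsilon-\theta_{i_1}^\varepsilon)$, which is bounded away from zero for suitable fixed indices because $\theta^{*}$ sweeps a full turn, so the quantitative inverse function theorem yields $O(\varepsilon)$ corrections that perturb the energy by $O(\varepsilon)$ and the $H^{1}$-norm by $O(\sqrt\varepsilon)$. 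Your route buys a more elementary and arguably more robust argument (it avoids the inflated-curve asymptotics and the tacit intermediate-value argument needed to inscribe the chain, and it would adapt to non-closed chains or other linear constraints); the paper's route buys a construction that is physically transparent, producing an honest non-self-intersecting polygon near the curve at every $\varepsilon$. Two small points you should still make explicit: after the correction you must verify that the perturbed vector lies in $T_{N_\varepsilon}$ (all increments, including the wrap-around one $\theta_1^\varepsilon-(\theta_{N_\varepsilon}^\varepsilon-2\pi)$, remain $O(\varepsilon)<\pi$ — immediate from $\theta^{*}\in C^{1}$ and $\theta^{*}(1)-\theta^{*}(0)=2\pi$), and the final diagonal argument should note that your recovery sequences converge strongly in $H^{1}$, so the passage from $\theta^{*}$ to $\theta$ is legitimate; the paper glosses over this reduction in the same way.
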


Note that the last assertion of the theorem also tells us that, if there is a sequence of chains $\left\{\Theta^\varepsilon\right\}_{\varepsilon>0}$ that satisfies a uniform energy bound $E^\varepsilon\left[\Theta^\varepsilon\right]<C$, then there is a subsequence of the corresponding affine angle functions $\{\theta_{\varepsilon_j}\}$ such that $\theta_{\varepsilon_j}\stackrel{H^1}\rightharpoonup\theta$ as $\varepsilon_j\to 0$ for some $\theta\in H^1([0,1]).$ One can easily check that the sequence of piecewise-constant angle functions with values given by $\left\{\Theta^\varepsilon\right\}_{\varepsilon>0}$ converges to the same $\theta$ strongly in $L^2([0,1])$.

\begin{proof} We begin by proving the final statement of the theorem. In what follows $C$ denotes a generic positive constant. Suppose that a sequence $\left\{\theta_\varepsilon\right\}_{\varepsilon>0}\subset H^1([0,1])$ satisfies a uniform energy bound $F_\varepsilon\left[\theta_\varepsilon\right]<C$. Then \eqref{eq:ffun} implies that $\theta_\varepsilon\in A_\varepsilon (0,1)$ so that $\theta_\varepsilon$ is piecewise-affine for every $\varepsilon$. By \eqref{eq:int} and the definition \eqref{eq:1.23} of $\psi_\varepsilon$, we have 
\[\frac{1}{\varepsilon}\sum_{i=1}^{N_\varepsilon}
  \psi\left({\theta^\varepsilon_{i}-\theta_{i-1}^\varepsilon}\right)=\varepsilon\sum_{i=1}^{N_\varepsilon}
  \psi_\varepsilon\left(\frac{\theta^\varepsilon_{i}-\theta_{i-1}^\varepsilon}{\varepsilon}\right)\leq C.\]
Because $\psi(0)=0$ is the unique global minimum of $\psi$, the previous estimate implies
\[\psi\left({\theta^\varepsilon_{i}-\theta_{i-1}^\varepsilon}\right)\leq C\varepsilon\]
for all $\varepsilon$. By a similar argument that led to \eqref{eq:1.24}, we obtain 
\[
\left|\theta_i^\varepsilon
    -\theta_{i-1}^\varepsilon\right|\leq \delta,
\]
for every $i=1,\ldots,N_\varepsilon$ where $\delta=o(1)$ in $\varepsilon$. This along with \eqref{eq:prop_psi} enables us to conclude that 
\[{\left({\theta^\varepsilon_{i}-\theta_{i-1}^\varepsilon}\right)^2}\leq C\psi\left({\theta^\varepsilon_{i}-\theta_{i-1}^\varepsilon}\right)\] 
for all $i=1,\ldots,N_\varepsilon$ and some $C>0$ when $\varepsilon$ is sufficiently small. This yields the inequality
\[ \int_0^1\left(\theta_\varepsilon^\prime\right)^2\,ds=\varepsilon\sum_{i=1}^{N_\varepsilon}
  \left(\frac{\theta^\varepsilon_{i}-\theta_{i-1}^\varepsilon}{\varepsilon}\right)^2\leq\varepsilon C\sum_{i=1}^{N_\varepsilon}
  \psi_\varepsilon\left(\frac{\theta^\varepsilon_{i}-\theta_{i-1}^\varepsilon}{\varepsilon}\right)\leq C.\]
Because we can always assume that $\theta_\varepsilon(0)\in[-\pi,\pi]$, the boundedness and hence the weak compactness of $\left\{\theta_\varepsilon\right\}_{\varepsilon>0}$ in $H^1([0,1])$ now follow from the Poincare inequality in one dimension.

\smallskip
We now proceed with proving $\Gamma$-convergence.

\smallskip
{\bf Proof of (a): Construction of the recovery sequence.}  Let $\theta \in H^1([0,1])$.  Suppose $\theta \notin
H^1_c([0,1])$, so that $E_0[\theta] = \infty$.  We define a constant
sequence by setting $\theta_\varepsilon = \theta$ for all $\varepsilon$.  If
$\theta_{\varepsilon}\in A_\varepsilon (0,1)$ for arbitrarily small
$\varepsilon$, then Lemma~\ref{r1}
would imply that $\theta \in H^1_c([0,1])$.  So there is an
$\bar{\varepsilon}>0$ such that $\theta=\theta_\varepsilon \notin
A_{\varepsilon}(0,1)$ and hence
$F_\varepsilon\left[\theta_\varepsilon\right]=\infty$ for all $0<\varepsilon\leq
\bar{\varepsilon}$.

We assume now that $\theta \in H^1_c([0,1])$.  By Lemma~\ref{r2}, we
can assume as well that $\theta\in C^{2}([0,1])$.  Working with a
smooth function will allow us to bound first and second derivatives of
$\theta$ uniformly on $[0,1]$, which we need to do for several later
estimates.  
Recall that in what follows $\varepsilon>0$ is such
that $N_\varepsilon=1/\varepsilon$ is in $\mathbb{N}$.
We divide the rest of the proof of (a) into several steps.

\smallskip
{\bf Step 1:}  Let $\bar{r}$ denote the curve whose angle function is
$\theta$.  We construct a chain with $N_\varepsilon$ sides 
that is uniformly close to $\bar{r}$. Later we shall demonstrate that the corresponding affine
function---which has the same energy as the discrete energy of the chain---approximates $\theta$ in $H^1([0,1])$.

\smallskip
Since $\bar{r}$ and any admissible chains have length $1$, we cannot inscribe an
admissible chain in
$\bar{r}$.  Instead, for
$h>0$ we define the `inflated' curve
$\bar{r}_{h}(s)=\bar{r}(s)+h\bar{N}(s)$, where $\bar{N}$ denotes the
(outward) normal to the curve $\bar{r}$ (see Fig. \ref{fig:2}).  
\begin{figure}[htb]
\centering
\includegraphics[height=1.5in]{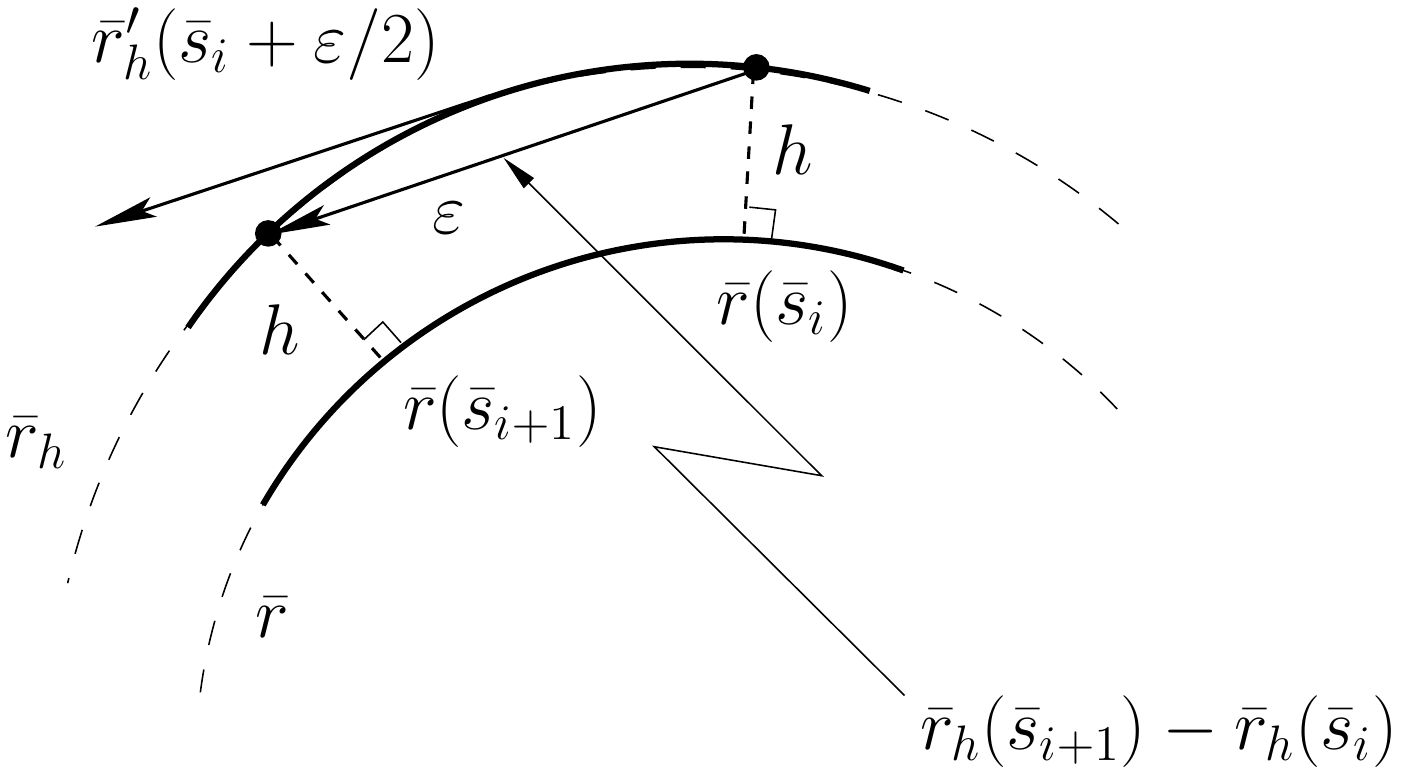}
\caption{A secant line and tangent vector for the inflated curve $\bar{r}_{h}$.}
  \label{fig:2}
\end{figure}
The length of $\bar{r}_{h}$
is $1+2\pi h$.  
Given $\varepsilon$ sufficiently small,
it is clear that there
exists an $h$ such that we can inscribe
a chain with $N_\varepsilon$ 
sides each of length $\varepsilon$ in $\bar{r}_{h}$.  So there
exist $\bar{s}_{1},\ldots,\bar{s}_{N_\varepsilon}\in [0,1]$ such that
$\left\{\bar{r}_h(\bar{s}_{i})\right\}_{i=1}^{N_\varepsilon}$ is a
chain.  Without loss of generality we can assume that $\bar{s}_{1}=0$.
%
%

We let  $\bar{\Theta}^\varepsilon:=\left(\bar{\theta}_1,\ldots,\bar{\theta}_{N_\varepsilon}\right)\in\mathbb{R}^{N_\varepsilon}$
denote the vector of angles associated with the chain (see \eqref{ee46}).
Then $\bar{\Theta}^{\varepsilon}\in T_{N_\varepsilon}$ and has a discrete energy
$E^{\varepsilon}[\bar{\Theta}^{\varepsilon}]$ defined by \eqref{ee39}.
By \eqref{eq:theta_fun} we construct  a piecewise-affine
function $\hat{\theta}_{\varepsilon}$ such that
$F_\varepsilon[\hat{\theta}_{\varepsilon}]=E^{\varepsilon}[\bar{\Theta}^{\varepsilon}]$. 
Our goal is to show that 
$\hat{\theta}_{\varepsilon}$ is close to $\theta$ in $H^1([0,1])$ and that
$F_\varepsilon[\hat{\theta}_{\varepsilon}]$ is close to $E_0[\theta]$.

\smallskip
{\bf Step 2:}  We derive two preliminary estimates 
\begin{equation}
 \bar{s}_{i+1}
 =
 \bar{s}_{i} + \varepsilon + O(\varepsilon^{3}) 
 \quad \text{and} \quad
 \bar{\theta}_{i}
 =
 \theta(\bar{s}_{i}+\varepsilon/2) + O(\varepsilon^{2}).
 \label{ee47}
\end{equation}
We begin with an initial estimate on
$\bar{s}_{i+1}-\bar{s}_{i}$.  To attain this, we define 
$F(\sigma):=|\bar{r}_{h}(\sigma)-\bar{r}_{h}(\bar{s}_{i})|$. One
can check that $F_{\sigma}(\bar{s}_{i})=1+h\theta'(\bar{s}_{i})$ and that
$F_{\sigma\sigma}(\bar{s}_{i})=h\theta''(\bar{s}_{i})$.  Because
$F_{\sigma}(\bar{s}_{i})\neq 0$ for
$h$ sufficiently small, the equation $F(\sigma)=\varepsilon$
defines a function  ${\sigma}(\varepsilon)$ for 
small $\varepsilon$, where $\sigma(\varepsilon)=\bar{s}_{i+1}$.  Observe that $F({\sigma}(\varepsilon))=\varepsilon$ implies
\begin{equation}
  F_{\sigma}\sigma_{\varepsilon}=1,
  \qquad
  F_{\sigma\sigma}\sigma^{2}_{\varepsilon}
  +
  F_{\sigma}\sigma_{\varepsilon\varepsilon}
  = 0,
  \label{ee24}
\end{equation}
so that
\begin{align}
  \sigma_{\varepsilon}
  &=
  (1+h\theta')^{-1}
  =1-h\theta'+O(h^{2}), \label{ee25}\\
  \sigma_{\varepsilon\varepsilon}
  &=
  -F_{\sigma\sigma}\sigma^{2}_{\varepsilon}/F_{\sigma}
  =
  -\theta''h(1+h\theta')^{-3}
  =
  -\theta''h + O(h^{2}). \label{ee27}
\end{align}
Hence
\begin{equation}
  {\sigma}(\varepsilon)
  =
  {\sigma}(0)
  +
  (1-h\theta'+O(h^{2}))\varepsilon
  +
  \frac{1}{2}(-\theta''h + O(h^{2}))\varepsilon^{2}
  +
  O(\varepsilon^{3}).
  \label{ee26}
\end{equation}
Because ${\sigma}(0)=\bar{s}_{i}$ and
${\sigma}(\varepsilon)=\bar{s}_{i+1}$, \eqref{ee26}
implies that
\begin{equation}
  \bar{s}_{i+1} - \bar{s}_{i}
  =
  \varepsilon + O(h\varepsilon) + O(\varepsilon^3), \label{ee28}
\end{equation}
Building on \eqref{ee28}, we have
\begin{equation}
  1
  =
  \sum_{i=1}^{N_\varepsilon}(\bar{s}_{i+1}-\bar{s}_{i})
  =
  \sum_{i=1}^{N_\varepsilon}(\varepsilon + O(h\varepsilon) + O(\varepsilon^3))
  =
  N_\varepsilon\varepsilon + \sum_{i=1}^{N_\varepsilon}(O(h\varepsilon) + O(\varepsilon^3)).
  \label{ee29}
\end{equation}
%
%
Because $N_\varepsilon\varepsilon=1$ and $\sum_{i=1}^{N_\varepsilon}(O(h\varepsilon)+ O(\varepsilon^3))=O(h) +
O(\varepsilon^2)$, \eqref{ee29} implies that 
$0=O(h)+ O(\varepsilon^2)$, or $h=O(\varepsilon^{2})$.
%
Now, returning to \eqref{ee28} and using $h=O(\varepsilon^{2})$ implies $\subrefb{ee47}{1}$.

Next we show $\subrefb{ee47}{2}$.
Note that $\bar{\theta}_{i}-\theta(\bar{s}_{i}+\varepsilon/2)$ is the
angle between
the vectors
\mbox{$\varepsilon^{-1}( \bar{r}_{h}(\bar{s}_{i+1})-\bar{r}_{h}(\bar{s}_{i}) )$} and
$\bar{r}'_{h}(\bar{s}_{i}+\varepsilon/2)$ (see Fig.~\ref{fig:2}).
We can write
\begin{align}
  \frac{\bar{r}_{h}(\bar{s}_{i+1})-\bar{r}_{h}(\bar{s}_{i})}{\varepsilon}
  &=
  \frac{\bar{r}'_{h}(\bar{s}_{i})(\bar{s}_{i+1}-\bar{s}_{i})
         +\bar{r}''_{h}(\bar{s}_{i})(\bar{s}_{i+1}-\bar{s}_{i})^{2}/2
         +O((\bar{s}_{i+1}-\bar{s}_{i})^{3})}
       {\varepsilon} \nonumber  \\
  &=\bar{r}'_{h}(\bar{s}_{i})+\bar{r}''_{h}(\bar{s}_{i})\varepsilon/2
         +O(\varepsilon^{2}), \label{ee19}
\end{align}
where we have used $\subrefb{ee47}{1}$.  Likewise, we can write
\begin{align}
  \bar{r}'_{h}(\bar{s}_{i}+\varepsilon/2)
  &=
  \bar{r}'_{h}(\bar{s}_{i})
  +
  \bar{r}''_{h}(\bar{s}_{i})\varepsilon/2
  +
  O(\varepsilon^{2}).  \label{ee20}
\end{align}
From \eqref{ee19} and \eqref{ee20}, we see that
\begin{equation}
  \bar{r}'_{h}(\bar{s}_{i}+\varepsilon/2)
  =
  \frac{\bar{r}_{h}(\bar{s}_{i+1})-\bar{r}_{h}(\bar{s}_{i})}{\varepsilon}
  +
  O(\varepsilon^{2}),
  \label{ee21}
\end{equation}
where the leading order term on the right hand side is a unit vector. The largest angle between $\bar{r}'_{h}(\bar{s}_{i}+\varepsilon/2)$ and ${\varepsilon}^{-1}\left({\bar{r}_{h}(\bar{s}_{i+1})-\bar{r}_{h}(\bar{s}_{i})}\right)$ for a small fixed magnitude of their difference is achieved when this difference is perpendicular to ${\varepsilon}^{-1}\left({\bar{r}_{h}(\bar{s}_{i+1})-\bar{r}_{h}(\bar{s}_{i})}\right)$. It then immediately follows that
$\bar{\theta}_{i}-\theta(\bar{s}_{i}+\varepsilon/2) = O(\varepsilon^{2})$.

\smallskip
{\bf Step 3:}  We now use the estimates \eqref{ee47} to show that (i) the piecewise-affine function 
$\hat{\theta}_{\varepsilon}$ constructed at the end of Step 1 is close to $\theta$ in $H^1([0,1])$ and (ii)
the energy $F_\varepsilon[\hat{\theta}_{\varepsilon}]$ is close to $E_0[\theta]$.

(i) First we demonstrate that $\hat{\theta}_{\varepsilon}$ is close to $\theta$ in $H^1([0,1])$.
We have
\begin{align}
	\int_{0}^{1}\!\left|\hat{\theta}_{\varepsilon}'(s)-{\theta}'(s)\right|^{2}ds
        &=
        \int_{0}^{s_{1}}\!
        \left|\frac{\bar{\theta}_{1}-(\bar{\theta}_{N_\varepsilon}-2\pi+\bar{\theta}_{1})/2}{\varepsilon/2}
        -{\theta}'(s)\right|^{2}ds \nonumber \\
        &\qquad +
        \sum_{i=1}^{N_\varepsilon-1}\int_{s_{i}}^{s_{i+1}}\!\left|
        \frac{\bar{\theta}_{i+1}-\bar{\theta}_{i}}{\varepsilon}
        -{\theta}'(s)\right|^{2}ds \label{ee9}\\
        &\qquad\qquad +
        \int_{s_{N_\varepsilon}}^{1}\!
        \left|\frac{(\bar{\theta}_{1}+2\pi+\bar{\theta}_{N_\varepsilon})/2 - \bar{\theta}_{N_\varepsilon}}{\varepsilon/2}
        -{\theta}'(s)\right|^{2}ds. \nonumber
\end{align}

To estimate the right-hand side of \eqref{ee9},  recall that $\bar{r}_h(\bar{s}_i),\ i=1,\ldots,N_\varepsilon$, denote the vertices of the chain inscribed into the inflated curve $\bar{r}_h$. We observe that
\begin{equation}
	\frac{\bar{\theta}_{i+1}-\bar{\theta}_{i}}{\varepsilon}
        =
	\frac{\theta(\bar{s}_{i+1}+\varepsilon/2)-\theta(\bar{s}_{i}+\varepsilon/2)+O(\varepsilon^{2})}{\varepsilon}
        =
        \theta'(\xi_{i})+O(\varepsilon),
	\label{ee4}
\end{equation}
where the first equality uses $\subrefb{ee47}{2}$ and where 
$\xi_{i}\in (\bar{s}_{i}+\varepsilon/2,\bar{s}_{i+1}+\varepsilon/2)$.
The equations 
$\bar{s}_{1}=0$ and $\subrefb{ee47}{1}$
imply
that $\bar{s}_{i+1}=i\varepsilon + O(\varepsilon^{2})$ for
$i=1,\ldots,N_{\varepsilon}$.  Then, because
$s_{i}=(2i-1)\varepsilon/2$, one has
\begin{equation}
  |\xi-s| < 2\varepsilon + O(\varepsilon^{2})
  \quad
  \text{for $s_{i}<s<s_{i+1}$
    and
    $\bar{s}_{i}+\varepsilon/2 < \xi < \bar{s}_{i+1}+\varepsilon/2$}.
  \label{ee7}
\end{equation} 
Combining \eqref{ee7} and \eqref{ee4} yields
\begin{eqnarray}
	\int_{s_{i}}^{s_{i+1}}\!\left|
        \frac{\bar{\theta}_{i+1}-\bar{\theta}_{i}}{\varepsilon}
        -\theta'(s)\right|^{2}ds
        &=
        \int_{s_{i}}^{s_{i+1}}\!\left|\theta'(\xi_{i})-\theta'(s)+O(\varepsilon)\right|^{2}ds
	\nonumber  \\
        &=
        \int_{s_{i}}^{s_{i+1}}\!\left|\theta''(\hat{\xi}_{i})(\xi_{i}-s)+O(\varepsilon)\right|^{2}ds
	\label{ee6}\\
        &\leq
        \int_{s_{i}}^{s_{i+1}}\!O(\varepsilon^{2})\,ds=O(\varepsilon^{3}).
	\nonumber
\end{eqnarray}

We need an estimate like \eqref{ee6} for the first and third terms on
the right-hand side of \eqref{ee9}.  However, both the curve $\bar{r}$ and the associated chain constructed in Step 1 are closed in the plane, hence their parametrizations can be extended periodically with period $1$ to $\mathbb{R}$. Selecting a different vertex in the chain to correspond to $s=0$ is equivalent to translating the parametrization by a number less than $1$. In this case, the first and the third integrals in \eqref{ee9} become one of the integrals in the sum in the middle term. Thus the first and the third integrals in \eqref{ee9} together admit the same $O(\varepsilon^3)$-estimate as in \eqref{ee6}.

Returning to \eqref{ee9}, we conclude that
\begin{equation}
	\int_{0}^{1}\!\left|\hat{\theta}_{\varepsilon}'(s)-\theta'(s)\right|^{2}ds
        =
\sum_{i=1}^{N_\varepsilon}\!O(\varepsilon^{3})=O(\varepsilon^2). \label{ee13}
\end{equation}

(ii) Now we estimate the difference between
$F_{\varepsilon}[\hat{\theta}_{\varepsilon}]$ and $E_{0}[\theta]$. Straightforward but tedious calculations based upon expanding $\psi$ show that
\begin{align}
  |F_{\varepsilon}[\hat{\theta}_{\varepsilon}]-E_{0}[\theta]|
  &=
 \left\lvert
  \int_{0}^{s_{1}}\!
   \left\{\frac{\psi''(0)}{2}
  \left[
  \left(\frac{\theta'(\xi_{0}) + \theta'(\xi_{N_\varepsilon})}{2}\right)^{2}
  -
  \theta'(s)^{2}
   \right]
   + O(\varepsilon)\right\} \,ds \right.  \nonumber \\
  &\quad+
  \sum_{i=1}^{N_\varepsilon-1}
  \int_{s_{i}}^{s_{i+1}}\!
   \left\{\frac{\psi''(0)}{2}
  \left[
  \theta'(\xi_{i})^{2}
  -
  \theta'(s)^{2}
  \right]
   + O(\varepsilon) \right\}\,ds
  \nonumber \\
  &\quad+
    \left.
    \int_{s_{N_\varepsilon}}^{1}\!
   \left\{\frac{\psi''(0)}{2}
  \left[
  \left(\frac{\theta'(\xi_{0}) + \theta'(\xi_{N_\varepsilon})}{2}\right)^{2}
  -
  \theta'(s)^{2}
   \right]
   + O(\varepsilon) \right\}
  \,ds
  \right\rvert.
  \label{ee15}
\end{align}
Estimating the right-hand side in \eqref{ee15} can be done in a way similar to
the estimates that led from \eqref{ee9} to \eqref{ee13} and demonstrates that
\[|F_{\varepsilon}[\hat{\theta}_{\varepsilon}]-E_{0}[\theta]|=O(\varepsilon^2).\]



\smallskip
{\bf Proof of (b): Lower semicontinuity.}  We suppose $\theta\in H^1([0,1])$,
$\{\theta_\varepsilon\}\subset H^1([0,1])$,
and
$\theta_\varepsilon \rightharpoonup \theta$ in $H^1([0,1])$.  We
show that $\liminf_{\varepsilon\to 0}
F_\varepsilon[\theta_\varepsilon] \geq E_0[\theta]$.

If $\theta\notin H^1_{c}([0,1])$, then by Lemma~\ref{r1}
there is an $\bar{\varepsilon}>0$ such that
$\theta_\varepsilon \notin A_{\varepsilon}(0,1)$ and hence
$F_\varepsilon[\theta_\varepsilon]=\infty$
for all $0<\varepsilon\leq \bar{\varepsilon}$.
So we assume that $\theta\in H^1_{c}([0,1])$.  We can further assume that
$\theta_\varepsilon\in A_\varepsilon(0,1)$ for all $\varepsilon>0$.
Because $\{\theta_\varepsilon\}$ converges weakly in $H^1([0,1])$,
$\{\theta_\varepsilon'\}$ is bounded in $L^{2}([0,1])$.  Using \eqref{eq:theta_fun_der},
we see that there is a
constant $C$ such that for $j=1,\ldots,N_{\varepsilon}$
\begin{equation}
  \varepsilon^{-1}\left(
  \theta_{j}^{\varepsilon}-\theta_{j-1}^{\varepsilon}
  \right)^{2}
  \leq
  \sum_{i=1}^{N_{\varepsilon}}\varepsilon
  \left(\frac{\theta_{i}^{\varepsilon}-\theta_{i-1}^{\varepsilon}}{\varepsilon}\right)^{2}
  =
  \int_0^1(\theta_\varepsilon^\prime)^{2}\,ds
  \leq
  C  \label{e40}
\end{equation}
(recall that $\theta_i^\varepsilon=\theta_{\varepsilon}\left((2i-1)\varepsilon/2\right)$
for $i=1,\ldots,N_{\varepsilon}$ and that $\theta_0^\varepsilon=\theta_{N_{\varepsilon}}^\varepsilon-2\pi$).
It follows that
$\varepsilon\theta_\varepsilon'\leq (C\varepsilon)^{1/2}$ uniformly in $s$.
%
%
%
Therefore
\begin{equation}
  \varepsilon^{-2}\psi(\varepsilon\theta_{\varepsilon}')
  =
  \frac{1}{2}\psi''(0)(\theta_{\varepsilon}')^{2}+o(1)
  \label{e42}
\end{equation}
uniformly in $\varepsilon$ and $s\in[0,1]$.

Now we have
\begin{align}
  \liminf_{\varepsilon \rightarrow 0} \int_0^1 \psi_{\varepsilon}(\theta_{\varepsilon}')\,ds
  &=
  \liminf_{\varepsilon \rightarrow 0} \int_0^1
  \varepsilon^{-2}\psi(\varepsilon\theta_{\varepsilon}')\,ds \nonumber\\
  &=
  \liminf_{\varepsilon \rightarrow 0} \int_0^1\left(
  \frac{1}{2}\psi''(0)(\theta_{\varepsilon}')^{2}+o(1)\right)\,ds  \nonumber\\
  &=
  \liminf_{\varepsilon \rightarrow 0} \int_0^1
  \frac{1}{2}\psi''(0)(\theta_{\varepsilon}')^{2}\,ds  \nonumber\\
  &\geq
  \int_0^1 \frac{\psi''(0)}{2} (\theta')^2\,ds \label{e41},
\end{align}
where the last inequality follows from the weak lower semicontinuity of the
$L^{2}$ norm.
\end{proof}


%
%


\section*{Appendix 1. Proof of Lemma~\ref{r1}}
To simplify notation, we write just $\theta_{n}$ and $N_{n}$ for
$\theta_{\varepsilon_{n}}$ and $N_{\varepsilon_{n}}$.
Because $\theta_{n} \rightharpoonup \theta$ in $H^1([0,1])$, 
$\theta_{n}$ converges uniformly to $\theta$  on $[0,1]$ and hence 
$\cos \theta_{n}$ converges uniformly to $\cos \theta$ on $[0,1]$.
Thus
\begin{align}
  \int_{0}^{1}\!\cos \theta 
  &=
  \underset{n\rightarrow \infty}{\lim }
  \int_{0}^{1}\!\cos\theta_{n} \nonumber \\
  &=
  \underset{n\rightarrow \infty}{\lim }
  \left[
    \int_{0}^{{\varepsilon_{n}}/2}\!\cos\theta_{n}
    +
    \sum_{i=1}^{N_{n}-1} \int_{s_{i}}^{s_{i+1}}\!\cos\theta_{n}
    +
    \int_{1-\varepsilon_{n}/2}^{1}\!\cos\theta_{n}
  \right].
  \label{ee48}
\end{align}
The sequence $\{ \theta_{n}  \}$ is uniformly bounded, so the first
and third terms on the right-hand side of \eqref{ee48} go to zero 
with $\varepsilon_{n}$.  For the sum between these terms, we have
\begin{align}
  \sum_{i=1}^{N_{n}-1}
  \int_{s_{i}}^{s_{i+1}}\!\!\!\cos\theta_{n}(s)\,ds
  &=
  \sum_{i=1}^{N_{n}-1}
  \left[
  \int_{s_{i}}^{s_{i+1}}\!\!\!\cos\theta_{n}(s_{i})\,ds
  +
  \int_{s_{i}}^{s_{i+1}}\!\!\!
  \left(\cos\theta_{n}(s)-\cos\theta_{n}(s_{i})\right)\,ds
  \right] \nonumber \\
  &=
  \varepsilon\sum_{i=1}^{N_{n}-1}
  \cos\theta_{n}(s_{i})
  +
  \sum_{i=1}^{N_{n}-1}
  \int_{s_{i}}^{s_{i+1}}\!
  \left(\cos\theta_{n}(s)-\cos\theta_{n}(s_{i})\right)\,ds.
  \label{ee49}
\end{align}
On the right-hand side of \eqref{ee49}, the first sum is 0 because
$\theta_{n}\in A_{\varepsilon_{n}}(0,1)$ and the second sum is
easily shown to go to $0$ as $\varepsilon_{n}\rightarrow 0$ using the
uniform convergence of $\{ \theta_{n}  \}$ and the uniform
continuity of $\theta$
on $[0,1]$.
$\square$


\section*{Appendix 2. Proof of Lemma~\ref{r2}}
Note that $\tilde{\theta}(s):=\theta(s)-\theta(0)$ still satisfies
\eqref{ee45} and $\tilde{\theta}(0)=0$.  If $\tilde{\theta}^{*}$ is a
smooth function approximating $\tilde{\theta}$ in $H^1([0,1])$, then
$\tilde{\theta}^{*}+\theta(0)$ is a smooth function approximating
$\theta$ in $H^1([0,1])$.  Hence without loss of generality we assume
that $\theta(0)=0$.

Because $\theta$ is in $H^1([0,1])$, it is continuous.
So there exist $s_{1}, s_{2}, s_{3}, s_{4}\in (0,1)$ such that
$\pi/2<\theta(s)<\pi$ for $s_{1}<s<s_{2}$ and
$\pi<\theta(s)<3\pi/2$ for $s_{3}<s<s_{4}$.
By adding appropriately defined bump functions to $\theta(s)$,
we can produce functions $\theta_{1}$,
$\theta_{2}$, $\theta_{3}$, and $\theta_{4}$ each
close to $\theta$ in $H^1([0,1])$ such that
\begin{align}
  &\theta_{1}(s)=\theta(s)\  \text{for} \ s\notin (s_{1},s_{2})
  \ \text{and} \
  \pi/2<\theta_{1}(s)<\theta(s) \ \text{for} \ s_{1}<s<s_{2},
  \label{ee31} \\
  &\theta_{2}(s)=\theta(s) \ \text{for}\  s\notin(s_{3},s_{4})
  \ \text{and}\
  \pi<\theta_{2}(s)<\theta(s)\  \text{for} \ s_{3}<s<s_{4},  \label{ee36}\\
  &\theta_{3}(s)=\theta(s) \ \text{for}\  s\notin(s_{1},s_{2})
  \ \text{and}\
  \theta(s)<\theta_{3}(s)<\pi \  \text{for} \ s_{1}<s<s_{2},  \label{ee37}\\
  &\theta_{4}(s)=\theta(s) \ \text{for}\  s\notin(s_{3},s_{4})
  \ \text{and}\
  \theta(s)<\theta_{4}(s)<3\pi/2  \  \text{for}\ s_{3}<s<s_{4}. \label{ee38}
\end{align}
Each of these new functions still satisfies
$\theta_{i}(0)=0$ and 
$\theta_{i}(1)=2\pi$
since the outputs of $\theta$ need not be modified
near the endpoints of $[0,1]$.

Now we define $G[\vartheta]:=(\int_{0}^{1}\!\cos\vartheta(s)\,ds,
\int_{0}^{1}\!\sin\vartheta(s)\,ds)$ and 
$H\colon [0,1]^{2}\rightarrow \mathbb{R}^{2}$ by
\begin{align}
	H(\delta_{1},\delta_{2})
        &=
        (H_{1}(\delta_{1},\delta_{2}),H_{2}(\delta_{1},\delta_{2}))
        \nonumber \\
        &:=
        G[ \delta_{1}(\delta_{2}\theta_{1} + (1-\delta_{2})\theta_{4})
           +
           (1-\delta_{1})(\delta_{2}\theta_{2} + (1-\delta_{2})\theta_{3}) ].
	\label{ee18}
\end{align}
We show that
$H_{1}(0,\delta_{2})
=\int_{0}^{1}\!\cos(\delta_{2}\theta_{2}+(1-\delta_{2})\theta_{3})\,ds<0$ for $0\leq \delta_{2} \leq 1$.
To see this, we first observe that for $s\notin (s_{1},s_{2})\cup
(s_{3},s_{4})$,
$\delta_{2}\theta_{2}(s)+(1-\delta_{2})\theta_{3}(s)=\theta(s)$.
Next, if $s\in (s_{1},s_{2})$, then $\theta_{2}(s)=\theta(s)$
and $\theta_{3}(s)>\theta(s)$, so that
\begin{equation}
  \pi/2
  <
  \theta(s)
  <
  \delta_{2}\theta_{2}(s)+(1-\delta_{2})\theta_{3}(s)
  <
  \pi,
  \label{ee32}
\end{equation}
which in turn implies that
\begin{equation}
  \int_{s_{1}}^{s_{2}}\!\cos(\theta(s)) \,ds
  >
  \int_{s_{1}}^{s_{2}}\!\cos(\delta_{2}\theta_{2}(s)+(1-\delta_{2})\theta_{3}(s)) \,ds.
  \label{ee33}
\end{equation}
In a similar way we can show that
\begin{equation}
  \int_{s_{3}}^{s_{4}}\!\cos(\theta(s)) \,ds
  >
  \int_{s_{3}}^{s_{4}}\!\cos(\delta_{2}\theta_{2}(s)+(1-\delta_{2})\theta_{3}(s)) \,ds.
  \label{ee34}
\end{equation}
Hence
\begin{equation}
  H_{1}(0,\delta_{2})
  =
  \int_{0}^{1}\!\cos(\delta_{2}\theta_{2}(s)+(1-\delta_{2})\theta_{3}(s)) \,ds
  <
  \int_{0}^{1}\!\cos(\theta(s)) \,ds
  =
  0.
  \label{ee35}
\end{equation}
Because $H_{1}(0,\delta_{2})<0$ for $0\leq \delta_{2} \leq 1$ and
 $H_{1}(0,\cdot)$ is continuous on $[0,1]$, we can conclude that
$H_{1}(0,\delta_{2})\leq -\eta<0$ for $0\leq \delta_{2} \leq 1$ for
some $\eta>0$.
Similarly, we can show that
$H_{1}(1,\delta_{2})\geq \eta>0$ for $0\leq \delta_{2} \leq 1$
and that
$H_{2}(\delta_{1},0)\leq -\eta<0$,
$H_{2}(\delta_{1},1)\geq \eta>0$ for $0\leq \delta_{1} \leq 1$.

Next, for each $i$ we define
$\hat{\theta}_{i}(x):=\theta_{i}(x)-2\pi x$. 
Then $\hat{\theta}_{i}(0)=\hat{\theta}_{i}(1)$, 
and we can extend
$\hat{\theta}_{i}$ to a function on $\mathbb{R}$ with period 1.  
We use
convolution to approximate $\hat{\theta}_{i}$ in $H^{1}(\mathbb{R})$ by a smooth
function $\hat{\theta}^{*}_{i}$ that has period 1, so that in particular 
$\hat{\theta}^{*}_{i}(1)=\hat{\theta}^{*}_{i}(0)$.  We now define 
$\theta^{*}_{i}(x):=\hat{\theta}^{*}_{i}(x)+2\pi x$ and restrict
$\theta^{*}_{i}$ to $[0,1]$.  
Then  $\theta^{*}_{i}$ is a
smooth function that approximates $\theta_{i}$ in $H^{1}([0,1])$ and 
$\theta^{*}_{i}(1)=\theta^{*}_{i}(0)+2\pi$.  
Note that each
$\theta^{*}_{i}$ is uniformly close to $\theta_{i}$ on $[0,1]$.

Next we define $H^{*}$ as we defined $H$ in
\eqref{ee18} but replacing $\theta_{i}$ with $\theta^{*}_{i}$.
Because
$\delta_{2}\theta^{*}_{2}(s)+(1-\delta_{2})\theta^{*}_{3}(s)$
is uniformly close to
$\delta_{2}\theta_{2}(s)+(1-\delta_{2})\theta_{3}(s)$ for $s\in [0,1]$ and
$0\leq \delta_{2}\leq 1$, 
$H_{1}(0,\delta_{2})\leq -\eta<0$ for $0\leq \delta_{2} \leq 1$
implies that
$H^{*}_{1}(0,\delta_{2})<0$ for $0\leq \delta_{2} \leq 1$.
Likewise we have that
$H^{*}_{1}(1,\delta_{2})>0$ for $0\leq \delta_{2} \leq 1$
and that
$H^{*}_{2}(\delta_{1},0)<0$,
$H^{*}_{2}(\delta_{1},1)>0$ for $0\leq \delta_{1} \leq 1$.

Using the Intermediate Value Theorem, we see that
for each $0\leq \delta_{2} \leq 1$, there is a
$\hat{\delta}(\delta_{2})$ such that
$H^{*}_{1}(\hat{\delta}(\delta_{2}),\delta_{2})=0$.  Also,
$H^{*}_{2}(\hat{\delta}(0),0)<0$ and
$H^{*}_{2}(\hat{\delta}(1),1)>0$.  So there exists a
$\bar{\delta}$ such that
$H^{*}_{2}(\hat{\delta}(\bar{\delta}),\bar{\delta})=0$.  And
$H^{*}_{1}(\hat{\delta}(\bar{\delta}),\bar{\delta})=0$.
We define
$\theta^{*}=\hat{\delta}(\bar{\delta})(\bar{\delta}\theta^{*}_{1} + (1-\bar{\delta})\theta^{*}_{4})
           +
           (1-\hat{\delta}(\bar{\delta}))(\bar{\delta}\theta^{*}_{2}
           + (1-\bar{\delta})\theta^{*}_{3})$.
Then $\theta^{*} \in C^{2}([0,1])$, 
$\theta^{*}$ approximates $\theta$ in $H^{1}([0,1])$, 
and $\theta^{*}$ satisfies the constraints
\eqref{ee45}.
$\square$

\bibliography{gamma-converg-references}	

\end{document}